\newcommand{\lb}[1]{{\color{magenta}\mathbf{#1}}}
\newcommand{\valspace}[1]{~|_{#1}~}
\newcommand{\CAP}[2]{\{ {#2}\; \#\lb{#1} \}}
\newcommand{\LFOLD}[3]{{{#2}\textcolor{dkred}{\wedge\kern-.3em{^\ast}}\CAP{#1}{#3}}}
\newcommand{\EPEG}{\Xe}
\begin{document}

\title[CPEG]{CPEG: A Typed Tree Construction from Parsing Expression Grammars with Regex-Like Captures}


\author{Daisuke Yamaguchi}
\affiliation{%
	\department{Graduate School of Engineering}
	\institution{Yokohama National University}
	\city{Yokohama} 
	\state{Kanagawa}
	\postcode{240-8501}
	\country{Japan}
}
\email{yamaguchi-daisuke-bf@ynu.jp}

\author{Kimio Kuramitsu}
\orcid{1234-5678-9012}
\affiliation{
	\department{Department of Mathematical and Physical Sciences}
	\institution{Japan Women's University}
	\city{Bunkyo-ku}
	\state{Tokyo}
	\postcode{112-868}
	\country{Japan}
}
\email{kuramitsuk@fc.jwu.ac.jp}

\begin{abstract}
	CPEG is an extended parsing expression grammar with regex-like capture annotation. Two annotations (capture and left-folding) allow a flexible construction of syntax trees from arbitrary parsing patterns. 
	More importantly, CPEG is designed to guarantee structural constraints of syntax trees for any input strings. This reduces the amount of user code needed to check whether the intended elements exist.
	
	To represent the structural constraints, we focus on regular expression types, a variant formalism of tree automata, which have been intensively studied in the context of XML schemas. Regular expression type is inferred from a given CPEG by the type inference that is formally developed in this paper. We prove the soundness and the uniqueness of the type inference. The type inference enables a CPEG to serve both as a syntactic specification of the input and a schematic specification of the output.
\end{abstract}

\begin{CCSXML}
	<ccs2012>
	<concept>
	<concept_id>10011007.10011006.10011039.10011040</concept_id>
	<concept_desc>Software and its engineering~Syntax</concept_desc>
	<concept_significance>500</concept_significance>
	</concept>
	<concept>
	<concept_id>10011007.10011006.10011041.10011688</concept_id>
	<concept_desc>Software and its engineering~Parsers</concept_desc>
	<concept_significance>500</concept_significance>
	</concept>
	</ccs2012>
\end{CCSXML}

\ccsdesc[500]{Software and its engineering~Syntax}
\ccsdesc[500]{Software and its engineering~Parsers}

\keywords{Parsing, parsing expression grammars, type inference, regular expression types}	

\maketitle

\section{Introduction}
Regular expressions, or regexes, are a popular language tool that describe a complex text pattern and can match some set of strings.
Its popularity, however, results not only from excellent pattern matching but also declarative data extraction, in which any sub-pattern within parentheses will be captured as numbered groups. 
Since the capture capability provides a straightforward means for extracting parsed data, many small parsers have been implemented simply with the help of  regexes.

The aim of our study is to bring the regex-like capture capability into parsing expression grammars, or PEGs \cite{POPL04_PEG}. 
The reason for this aim is that PEGs are more expressive than regular expressions, which would enable us to extract some nested data (such as XML and JSON) that cannot be recognized by regexes. 
On the other hand, the capture capability would provide PEGs with a more straightforward means for implementing a parser than existing parser generators with action code. 

Nez parser \cite{Nez} is based on a declaratively extended parsing expression grammar with regex-like capture annotation. Two extended annotations (capture and left-folding) allow a flexible construction of syntax trees from arbitrary parsing patterns. We have demonstrated that Nez can parse many programming languages including Java, JavaScript, and Python.

While the declarative tree construction in Nez is convenient, it is still weak since the constructed trees are untyped. This means that they are treated as a common structure of tree data. To traverse their contents safely, the users need to check whether the intended elements exist.
Parser generators such as Yacc \cite{Yacc} and ANTLR \cite{PLDI11_Antlr} can produce typed trees throughout their programmed action code, which seems more suitable for handling complex syntax trees. Similarly, typed trees, or trees whose structures are well guaranteed, are desirable in the declarative tree construction.

The main challenge of this paper is to provide typing rules for syntax trees that will be captured by a PEG. A critical issue is to infer a type of tree from a grammar, before constructing concrete trees. 
This indicates that a declarative grammar guarantees some structural constraints of parse trees of any input strings. This property could be a good foundation for further static binding with a programming language. 

As the first attempt to infer a type of tree, we carefully designed CPEG to produce labeled unranked trees, which are equal to XML documents \cite{Hosoya:2010:FXP:1941946} (Note that CPEGs are a substantial subset of the Nez grammars). Here, we focus on \emph{regular expression types}, or RETs, which have been studied in a foundation of XML schemas and a type system of tree automata \cite{Hosoya:2003:XST:767193.767195}.
The type inference for CPEGs developed in this paper shows that CPEGs become a schematic specification of the output and not only a syntactic specification of the input.

The remainder of this paper proceeds as follows. 
In Section 2, we describe our motivation using Nez grammar and RETs.
In Section 3, we formally develop a CPEG as a string-to-tree transducer.
In Section \ref{formaltypingdef}, we introduce regular expression types and define type inference rules for CPEG.
In Section 5, we prove the soundness and the uniqueness of our type inference.
In Section \ref{relwork}, we review related work, and Section \ref{conc} concludes this paper.

\section{Motivating Examples}\label{BC}

We describe a motivation of tree construction and its typing in the context of PEGs. 
Here we use Nez grammar \cite{NezGrammar}, which is an open source implementation of CPEG. 

\subsection{Parsing Expression Grammars}

Nez grammar is a PEG-based grammar specification language, 
whose constructs come from those of PEGs.
Nez grammar is a set of syntax rules that are defined by a mapping from a non-terminal $A$ to a parsing expression $\EPEG$:
\[
A = \EPEG
\]

\begin{table}[bt]
	\centering
	\begin{tabular}{llll} \toprule
		PEG  & Type & Description\\ \midrule
		\verb|' '| & Primary  & Matches text\\
		$[ ]$ & Primary  & Matches character class \\
		$.$ & Primary  & Any character\\
		$A$ & Primary  & Nonterminal application\\
		$(\EPEG)$ & Primary  & Grouping\\
		$\EPEG\OPT$ & Unary suffix  & Option\\
		$\EPEG\ZERO$ & Unary suffix  & Zero-or-more repetitions\\
		$\EPEG\ONE$ & Unary suffix  & One-or-more repetitions\\
		$\AND\EPEG$ & Unary prefix  & And-predicate\\
		$\NOT\EPEG$ & Unary prefix  & Negation\\
		$\EPEG_1\CAT\EPEG_2$ & Binary  & Sequencing\\
		$\EPEG_1\ORE\EPEG_2$ & Binary  & Prioritized Choice\\ \bottomrule
	\end{tabular}
	\caption{PEG operators} 
	\label{table:peg}
\end{table}

Table \ref{table:peg} shows a list of PEG operators used in Nez which inherits the formal interpretation of PEGs \cite{POPL04_PEG}. This indicates that the string \verb|'abc'| exactly matches the same input, while \verb|[abc]| matches one of these characters. The . operator matches any single character. The $\EPEG\OPT$, $\EPEG\ZERO$, and $\EPEG\ONE$ expressions behave as in common regular expressions, except that they are greedy and match until the longest position. The $\EPEG_1\CAT\EPEG_2$ attempts two expressions $\EPEG_1$ and $\EPEG_2$ sequentially, backtracking to the starting position if either expression fails. The choice $\EPEG_1\ORE\EPEG_2$ first attempt $\EPEG_1$ and then attempt $\EPEG_2$ if $\EPEG_1$ fails. The expression $\AND\EPEG$ attempts $\EPEG$ without consuming any character. The expression $\NOT\EPEG$ succeeds if $\EPEG$ fails but fails if $\EPEG$ succeeds. 

In general, PEGs can express all languages that can be expressed by {\em deterministic} context-free grammars (such as LALR and LL($k$) grammars).  

PEGs, on the other hand, provide no specification for the output of a parser, 
while the parser users require certain forms of syntax trees that contain all necessary information for further processing. 
Note that we can regard non-terminals as a tree constructor in a way that a labeled tree node $A[...]$ is constructed from $A = \EPEG$. This approach is similar to that of derivation trees, resulting in redundantly nested trees. Besides, some forms of trees are not well constructed, as described in Section \ref{fold-capture}.

\subsection{Tree Annotation}

PEG only provides the syntactic matching capability while Nez provides two additional annotations (called \emph{capture} and \emph{fold-capture}) to construct complex syntax trees in a parser context. 

\subsubsection{Capture}\label{capanno}

The capture annotation $\CAP{L}{\Xe}$ is a straightforward extension to specify a parsing expression $\Xe$ (with enclosing braces $\{~~\}$) that extracts its matched string as a tree node. The extracted node is labeled by the given $\#\lb{L}$, which is used to identify the type of tree nodes. 

To start, let us consider a simple example, \pe{VAL} rule, which recognizes a sequence of numeric characters.
Here, \pe{Val} is a capture version of \pe{VAL} whose matched strings are constructed as a node labeled as \#Int.

Let us start a token extraction. 

\begin{peg}[morekeywords[0]={Int}]
	VAL = [0-9]\+
	Val = { [0-9]\+ #Int }
\end{peg}

We use $x \Parse{\Xe} v$ to write that an expression $\Xe$ parses an input string $x$ 
and then transforms it into a tree $v$. 

Here, we show how \pe{Val} parses the string \verb|123|: 

\begin{center}
	\branchheight{0.3in}
	\texttt{123} $\Parse{Val}$ \synttree{2}[\#Int[123]] 
\end{center}

The nested capture is naturally interpreted as a nested containment of trees that are constructed inside the braces. The following \pe{Prod2} parses from \texttt{123*45} to a tree depicted in the following manner:

\begin{peg}
	Prod2 = { Val '*' Val #Mul }
\end{peg}

\begin{center}
	\branchheight{0.3in}
	\texttt{123*45} $\Parse{Prod2}$ \synttree{3}[\#Mul[\#Int[123]][\#Int[45]]]
\end{center}

The \pe{Prod2} accepts only a single multiplication expression. 
Multiple multiplications such as \verb|1*2*5| can be captured in two different forms.

\begin{peg}
	ProdM = { Val ('*' Val )\* #Mul }
	Prod = { Val ('*' Prod ) #Mul } \/ Val
\end{peg}

The \pe{ProdM} and \pe{Prod} both accept the same inputs, while they produce different forms of trees. 
The \pe{ProdM} uses a repetition which forms a variable-length list of trees. 
The \pe{Prod}, on the other hand, uses a recursion which forms a recursively nested tree. 

\begin{center}
	\branchheight{0.3in}
	\texttt{123*45*6} $\Parse{ProdM}$ 
	\synttree{3}[\#Mul
	[\#Int[123]]
	[\#Int[45]]
	[\#Int[6]]
	]
\end{center}

\begin{center}
	\branchheight{0.3in}
	\texttt{123*45*6} $\Parse{Prod}$ 
	\synttree{4}[\#Mul
	[\#Int[123]]
	[\#Mul
	[\#Int[45]]
	[\#Int[6]]
	]
	]
\end{center}

As shown, we can switch a form of trees by repetition and recursion. 

\subsubsection{Fold-Capture} \label{fold-capture}

In the previous section, we see that a recursion of capturing produces a nested form of recursive trees.
It is important to note that the produced trees are always in the right-associative form since PEGs do not allow left recursions. 

\begin{peg}
	ProdL = { (ProdL \/ Val) '*' Val #Mul }
\end{peg}

This example suggests that the capture does not well describe arbitrary forms of syntax trees, especially left-associative binary operators. 

CPEG additionally provides the fold-capture annotation, denoted as $\FOLD\{ \Xe\; \#\lb{L} \} $, which constructs a tree containing the left-hand tree as its first subtree.

Here is an example of a fold-capture version of \pe{Prod2}, where the first \pe{Val} is factored outside and then folded inside by $\FOLD$. 

\begin{peg}
	Prod2 = Val \^{ '*' Val #Mul } 
\end{peg}

The $\FOLD$ operator is left-associative. The trees on the left hand are folded one after another. 

\begin{peg}
	((Val \^{ '*' Val #Mul} ) ..) \^{ '*' Val #Mul}
\end{peg}

We use the repetition to denote this iterative folding. 
Finally, the left-associative version of \pe{Prod} is described as follows:

\begin{peg}
	ProdL = Val (\^{ '*' Val #Mul })\*
\end{peg}

\begin{center}
	\branchheight{0.3in}
	\texttt{123*45*6} $\Parse{ProdL}$ 
	\synttree{4}[\#Mul
	[\#Mul
	[\#Int[\texttt{123}]]
	[\#Int[\texttt{45}]]
	]
	[\#Int[\texttt{6}]]
	]
\end{center}

In fact, left-recursion is a significant restriction of PEG. Although there is a known algorithm for eliminating any left-recursion from a grammar \cite{OOPSLA14_Antlr}, this algorithm does not ensure the left-associative. Using fold-capture annotation, we can replace the left-recursion under the impression of keeping the associativity.

\subsection{Typing with Regular Expression Types}\label{tret}
The declarative tree annotations in Nez are convenient and powerful enough to express many types of syntax trees, ranging from XML and JSON to Java and JavaScript \cite{Nez}. 
Trees that the parser users receive are untyped, which are the so-called \emph{common trees} that are formed in a common structure.

To traverse the trees safely, the users need to check whether the traversed tree is in an intended structure (i.e. its label and arity), as an XML schema variation.
Nevertheless, embedding the checking code in a traversal function makes the program fuzzier and prone to errors \cite{Petricek:2016:TDM:2908080.2908115}.

Besides, the traversal function should be implemented carefully so that the function is exhaustive\textemdash that is, some rule should be applied for all possible structured input trees.
Static analysis techniques for checking exhaustiveness checking are based on a type of input value. For instance, XDuce  \cite{Hosoya:2003:XST:767193.767195}, which is a statically typed XML processing language, provides an exhaustiveness checker that is found on a type called regular expression type \cite{Hosoya:2005}.
Regular expression types are a variant expression of tree automata \cite{tata2007}, which has been developed in the contexts of XML schema validations.

As our first attempt at the schematic variation on the trees and the statical exhaustiveness checking, we propose typing rules that infer a regular expression type for a given grammar.

Using regular expression types, all trees that can be parsed from \pe{ProdM} in the previous section have a type $\mathtt{ProdM}$, which is defined as follows:
\begin{align*}
\mathtt{type~ProdM}&=\lb{Mul}[\mathtt{Val},\mathtt{Val}^\ast]\\
\mathtt{type~Val}&=\lb{Int}[\mathtt{Empty}]
\end{align*}

Here, the type $\lb{Mul}[\mathtt{Val},\mathtt{Val}^\ast]$ describes a tree with $\lb{Mul}$ that has subtrees typed by the elements of \verb|[   ]| . The `$,$', `$\ast$', and $\mathtt{Empty}$ inside respectively denotes \emph{concatenation}, \emph{repetition}, and \emph{empty tree} as regular expression operators. This indicates that the $\lb{Mul}$ tree has one or more subtrees that are typed with $\mathtt{Val}$, which is a type variable defined in the second line. The type $\lb{Int}[\mathtt{Empty}]$ represents no more subtrees and it's type is Int. 

Regular expression types allow recursive type definition.
All trees that can be parsed from \pe{ProdL} are described in the following type:
\begin{align*}
\mathtt{type~ProdL}&=\lb{Prod}[\mathtt{ProdL},\mathtt{Val}]~|~\mathtt{Val}\\
\mathtt{type~Val}&=\lb{Int}[\mathtt{Empty}]
\end{align*}

\section{Formal Definition of CPEG}\label{formaldef}

CPEG is a subset of Nez grammar \cite{Nez} to highlight a focused extension to the construction of labeled unranked trees.

\subsection{Grammar}

A CPEG $G$ is a 5-tuple $G=(N_G,\Sigma,P_G,\Xe_s,\mathcal{S})$, where $N_G$ is a finite set of non-terminal symbols, $\Sigma$ is a finite set of terminal symbols, $P_G$ is a finite set of \textit{production rules}, $\Xe_s$ is a \textit{start expression} and $\mathcal{S}$ is a finite set of label symbols.

Each production rule $r\in P_G$ is a pair $(A,\Xe)$, that is written as $A\leftarrow \Xe$, where $A\in N_G$ and $\Xe$ is an expression. For any $A\in N_G$, there is exactly one $\Xe$ such that $A\leftarrow \Xe$. We regard $P_G$ as a function from non-terminals to expressions.

The syntax of an expression $\Xe$ is defined in Figure \ref{op}.

\begin{figure}[ht]
	\begin{equation*}
	\begin{array}{lllll}
	e& ::=&\varepsilon& & \mbox{empty} \\
	& & a&(a\in\Sigma)& \mbox{terminal} \\
	& & A&(A\in N_G)& \mbox{nonterminal} \\
	& & \Xe_1\CAT\Xe_2& &\mbox{sequence} \\
	& & \Xe_1\ORE \Xe_2& &\mbox{ordered choice} \\
	& & \Xe\ZERO&& \mbox{repetition} \\
	& & \NOT\Xe&& \mbox{not-predicate} \\
	& & \CAP{L}{\Xe}&(\lb{L}\in\mathcal{S})&\mbox{capture}\\
	& & \LFOLD{L}{\Xe_1}{\Xe_2} & (\lb{L}\in\mathcal{S})&\mbox{fold-capture}\\
	\end{array}
	\end{equation*}
	\caption{Expressions in CPEG}
	\label{op}
\end{figure}

The syntax of CPEGs is reasonably simple. Due to the following syntactic sugars (also defined in \cite{POPL04_PEG}), CPEG expresses all the parsing expressions that are constituted by PEG operators (also in Table \ref{table:peg}).

\begin{eqnarray}
'abc' & = & 'a' \ 'b' \ 'c' \notag \\
\left[abc\right]  & = & 'a' \ORE \ 'b' \ORE 'c' \notag \\
\Xe\ONE & = & \Xe\CAT\Xe\ZERO \notag \\
\Xe\OPT & = & \Xe \ORE \ \varepsilon \notag \\
\AND \Xe & = &  \NOT(\NOT \Xe) \notag
\end{eqnarray}

Similarly, the fold-capture annotation can be rewritten by the capture annotation, such as:
\begin{eqnarray}
\Xe_1 \wedge \CAP{L}{\Xe_2} & = &  \CAP{L}{\Xe_1\CAT\Xe_2}  \notag \\
\Xe_1 (\wedge \CAP{L}{\Xe_2})\ORE (\wedge \CAP{L'}{\Xe_3})& = &  \CAP{L}{\Xe_1\CAT\Xe_2} \ORE \CAP{L'}{\Xe_1\CAT\Xe_3}  \notag \\
\Xe_1 (\wedge \CAP{L}{\Xe_2})\OPT & = &  \CAP{L}{\Xe_1\CAT\Xe_2} \ORE \Xe_1 \notag
\end{eqnarray}

An important exception is the repetitive combination such as $\Xe_1 (\wedge \CAP{L}{\Xe_2})\ZERO$,
which can not be rewritten by the capture annotation.
Thus, we focus only on the repetitive fold-capture as defined in Figure \ref{op}.
Note that $\Xe_1 (\wedge \CAP{L}{\Xe_2})\ZERO$ is a syntax sugar of $\CAP{L}{A \CAT\Xe_2}$ where $A = \CAP{L}{A\CAT\Xe_2} / \Xe_1$. However, this syntactic sugar contains a left recursion that never terminates in PEGs.

\paragraph{Well-Formed Grammars}
As will be discussed in Section \ref{formaltypingdef}, a regular expression type is inferred for each CPEG.
Some inferred types, however, do not hold a \emph{well-formedness} condition \cite{Hosoya:2005} without any syntactic restriction on CPEG.
Here, we impose an additional restriction to the syntax of CPEG. We refer to the CPEG that holds the restriction as \emph{well-formed grammar}.

Well-formed grammar is a CPEG that holds either of the following two conditions:
\begin{itemize}
	\item recursive use of nonterminals occurs only in tail positions.
	\item if a sequence $A~\Xe$ exists in the production rules, where $A$ is a recursively used nonterminal and $\Xe$ is a some expression (i.e., $A$ is not in tail positions); then $\Xe$ has neither capture nor fold-capture as a subexpression.
\end{itemize}

For instance, the CPEG that has a production rule as the following:
\[A\leftarrow \CAP{L_1}{\Xe_{1}}A\CAP{L_2}{\Xe_{2}}~/~\Xe\]
is not a well-formed grammar, because $A$ is not in the tail position and the subsequent expression of $A$ is a fold-capture.

Whereas, the CPEG that has a production rule as follows:
\[A\leftarrow \CAP{L_1}{\Xe_{1}}A~/~\Xe\]
is a well-formed grammar, because $A$ is in the tail position.

Furthermore, the CPEG which have a production rule as follows:
\[A\leftarrow \CAP{L_1}{\Xe_{1}}Aa~/~\Xe\]
is a well-formed grammar while $A$ is not in the tail position, because the subsequent expression of $A$ is a terminal $a$.

Naturally, the CPEG which have production rules:
\begin{align}
A&\leftarrow \CAP{L_1}{\Xe_{1}}Ba~/~\Xe \notag\\
B&\leftarrow \CAP{L_2}{\Xe_{2}}A~/~\Xe' \notag
\end{align}
is a well-formed grammar as well.

\subsection{Formal Interpretation of CPEG}
The semantics is defined as a relation from expressions to trees.

\subsubsection{Tree}

We start by defining a textual notation to denote a tree for convenience.

The tree of CPEG is a labeled unranked tree on $\mathcal{S}$ and $\Sigma^\ast$,
where the nodes are labeled with $\lb{L}\in\mathcal{S}$ and the leaves hold a string $x\in\Sigma^\ast$.

Let $\mathcal{T}_{\mathcal{S}\times\Sigma^\ast}$ be a set of trees on $\mathcal{S}$ and $\Sigma^\ast$.
The syntax of a tree $v\in\mathcal{T}_{\mathcal{S}\times\Sigma^\ast}$ is defined as follows.
\begin{equation*}
\begin{array}{lllll}
v& ::=&  \lb{L}[v]\quad&(\lb{L}\in\mathcal{S})~&\mbox{node} \\
& & v_1,v_2 &&\mbox{concatenation} \\
& &x& (x\in\Sigma^\ast)&\mbox{string}  \\
\end{array}
\end{equation*}

Node $\lb{L}[v]$ denotes the node that is labeled with a symbol $\lb{L}$ and has subtrees denoted by $v$.
We use a concatenation operator \verb|,| to handle multiple subtrees.
Since the order of subtrees is preserved, the concatenation \verb|,| is not commutative.
The notation $x$ denotes a string including empty string.
We assume that a concatenation of two strings is equal to a single string. That is $x,x'$ is equal to $xx'$.
Additionally, we assume a concatenation of a node and a string is equal to the node. That is $\lb{L}[v],x$ and $x,\lb{L}[v]$ are equal to $\lb{L}[v]$.
This premise reads that if a tree $\lb{L}[v],x$ or $x,\lb{L}[v]$ is constructed, the tree can be regarded as $\lb{L}[v]$.
In other words, a string that is concatenated with a node can be ignored.

Here, we show the same trees with both pictorial and textual notations as follows.\\
\begin{minipage}[b]{.44\linewidth}
	\centering
	\branchheight{0.3in}
	\synttree{3}[\#$\lb{Mul}$
	[\#$\lb{Int}$[\texttt{123}]]
	[\#$\lb{Int}$[\texttt{45}]]
	[\#$\lb{Int}$[\texttt{6}]]
	]
	\[\lb{Mul}[\lb{Int}[\mathtt{123}],\lb{Int}[\mathtt{45}],\lb{Int}[\mathtt{6}]]\]
\end{minipage}
\begin{minipage}[b]{.56\linewidth}
	\centering
	\branchheight{0.25in}
	\synttree{4}[\#$\lb{Mul}$
	[\#$\lb{Mul}$
	[\#$\lb{Int}$[\texttt{123}]]
	[\#$\lb{Int}$[\texttt{45}]]
	]
	[\#$\lb{Int}$[\texttt{6}]]
	]
	\[\lb{Mul}[\lb{Mul}[\lb{Int}[\mathtt{123}],\lb{Int}[\mathtt{45}]],\lb{Int}[\mathtt{6}]]\]
\end{minipage}

\subsubsection{Operational Semantics}\label{peg2tree}
To formalize the semantics of a grammar $G$,
we define a relation $\Downarrow_G$ from $(\Xe,x)$ to $(o,y)$,
where $\Xe$ is an expression of CPEG, $x$ is an input string, $y$ is an unconsumed string, and $o \in \mathcal{T}_{\mathcal{S}\times\Sigma^\ast}\cup\{f\}$ is an output.
If $o$ is in the form of a tree $v$, the matching succeeds and the tree $v$ is constructed from an input string $x$.
If $o=f$, the distinguished symbol $f$ indicates \emph{failure}.

For $((\Xe,x),(o,y)) \in \Downarrow_G$, we write $\Xe\Downarrow_y^x o$. This is read as ``\emph{an expression $\Xe$ parses an input $x$ and transforms it to an output $o$ with an unconsumed string $y$}''.

Now supposing $ v \in\mathcal{T}_{\mathcal{S}\times\Sigma^\ast}$, $a,b,c\in\Sigma$, $~x,y,z\in\Sigma^\ast$, and $\varepsilon$ is an empty string, $\Downarrow^x_y$ is the smallest relation closed under the set of rules shown in Figure \ref{down}.

\begin{figure*}[ht]
	\begin{multicols}{3}
		\infax[E-Empty]{\varepsilon\Downarrow_x^x\varepsilon}
		
		\infax[E-Term1]{a\Downarrow_x^{ax}a}
		
		\infax[E-Term2]{a\Downarrow_x^{bx}f\quad(a\ne b)}
		
		\infrule[E-Nt]{P_G(A)\Downarrow^x_yv}{
			A\Downarrow^x_y v}
		
		\infrule[E-Seq1]{\Xe_{1}\Downarrow^{x_1x_2y}_{x_2y}v_1\andalso \Xe_2\Downarrow^{x_2y}_yv_2}{
			\Xe_1\Xe_2\Downarrow^{x_1x_2y}_yv_1,v_2}
		
		\infrule[E-Seq2]{\Xe_{1}\Downarrow^x_xf}{
			\Xe_1\Xe_2\Downarrow^x_xf}
		
		\infrule[E-Seq3]{\Xe_{1}\Downarrow^{x_1y}_{y}v_1\andalso \Xe_2\Downarrow^y_yf}{
			\Xe_1\Xe_2\Downarrow^{x_1y}_{x_1y}f}
		
		\infrule[E-Alt1]{\Xe_1\Downarrow^{xy}_yv_1}{
			\Xe_1/\Xe_2\Downarrow^{xy}_yv_1}
		
		\infrule[E-Alt2]{\Xe_1\Downarrow^{xy}_{xy}f\andalso \Xe_2\Downarrow^{xy}_yv_2}{
			\Xe_1/\Xe_2\Downarrow^{xy}_yv_2}
		
		\infrule[E-Alt3]{\Xe_1\Downarrow^x_xf\andalso \Xe_2\Downarrow^x_xf}{
			\Xe_1/\Xe_2\Downarrow^x_xf}
		
		\infrule[E-Rep1]{\Xe\Downarrow^{x_1x_2y}_{x_2y}v_1\andalso \Xe\ZERO\Downarrow^{x_2y}_y v_2}{
			\Xe\ZERO\Downarrow^{x_1x_2y}_yv_1,v_2}
		
		\infrule[E-Rep2]{\Xe\Downarrow^x_xf}{
			\Xe\ZERO\Downarrow^x_x\varepsilon}
		
		\infrule[E-Not1]{\Xe\Downarrow^{xy}_yv}{
			\NOT\Xe\Downarrow^{xy}_{xy}f}
		
		\infrule[E-Not2]{\Xe\Downarrow^{x}_xf}{
			\NOT\Xe\Downarrow^x_x\varepsilon}
		\infrule[E-Capture1]{\Xe\Downarrow^{xy}_y v}{
			\CAP{L}{\Xe}\Downarrow^{xy}_y \lb{L}[v]}
		\infrule[E-Capture2]{\Xe\Downarrow^{x}_xf}{
			\CAP{L}{\Xe}\Downarrow^{x}_x f}
		
	\end{multicols}
	\infrule[E-FoldCap1]{\Xe_1\Downarrow^{xy_1y_2\dots y_nz}_{y_1y_2\dots y_nz} v_1\andalso \Xe_2\Downarrow^{y_1y_2\dots y_nz}_{y_2\dots y_nz}v_2
		\quad\cdots\quad \Xe_2\Downarrow^{y_nz}_{z}v_n\andalso \Xe_2\Downarrow^z_z f}{
		\LFOLD{L}{\Xe_1}{\Xe_2}\Downarrow^{xy_1y_2\dots y_nz}_{z}\lb{L}[\lb{L}[\dots\lb{L}[\lb{L}[v_1,v_2],v_3],\dots,v_{n-1}],v_n]}
	\begin{multicols}{2}
		\infrule[E-FoldCap2]{\Xe_1\Downarrow^{x}_{x}f}{
			\LFOLD{L}{\Xe_1}{\Xe_2}\Downarrow^{x}_x f}
		\infrule[E-FoldCap3]{\Xe_1\Downarrow^{xy}_y v_1\andalso \Xe_2\Downarrow^{y}_{y} f}{
			\LFOLD{L}{\Xe_1}{\Xe_2}\Downarrow^{xy}_{y} v_1}
	\end{multicols}
	\caption{Rules for the relation $\Downarrow_y^x$}
	\label{down}
\end{figure*}

The matching interpretation in CPEG is the same as PEG \cite{POPL04_PEG}, except for some tree construction.
The capture and fold-capture are an explicit tree constructor as in (\textsc{E-Capture1}) and (\textsc{E-FoldCap1}). All the trees constructed in the subexpressions are contained in a newly constructed tree.

The rules for fold-capture $\LFOLD{L}{\Xe_1}{\Xe_2}$ are (\textsc{E-FoldCap1}), (\textsc{E-FoldCap2}) and (\textsc{E-FoldCap3}).
By (\textsc{E-FoldCap1}), a left-associative tree is derived from $\LFOLD{L}{\Xe_1}{\Xe_2}$, if the presuppositions of the rule are satisfied. The presuppositions are summarized in two conditions.
The first condition is that the subexpression $\Xe_1$ derives a tree $v_1$ from an input string $xy_1y_2\dots y_nz$.
Now we denote the unconsumed string by $y_1y_2\dots y_nz$.
The second condition is that subexpression $\Xe_2$ derives a tree $v_2$ from the string $y_1y_2\dots y_nz$.
Additionally, if tree $v_3$, tree $v_4$, $\cdots$ and tree $v_n$ are derived, by applying $\Xe_2$ repeatedly to the unconsumed string until the out put goes failure, then the derived tree of $\LFOLD{L}{\Xe_1}{\Xe_2}$ is a left-associative tree:
\[\lb{L}[\underbrace{\lb{L}[\dots\lb{L}[\lb{L}[v_1}_{n-1},v_2],v_3],\dots,v_{n-1}],v_n].\]
The tree $v_1$ is contained at the left branch that is $n-1$ times nested from the root. The other tree $v_i$ ($2\leq i\leq n$) is stored at the right branch that is $n-i+1$ times nested from the root.
The derived tree is also described pictorially as below.

\begin{center}
	\begin{tikzpicture}[level distance=0.6cm]
	\node{$\lb{L}$}
	child {node{$\lb{L}$}
		child [dotted] {
			node{$\lb{L}$}
			child [solid] {node{$\lb{L}$}
				child {node{$v_1$}}
				child {node{$v_2$}}
			}
			child [solid] {node{$v_3$}}
		}
		child {node{$v_{n-1}$}}
	}
	child {node{$v_n$}};
	\end{tikzpicture}
\end{center}
By (\textsc{E-FoldCap2}), if the subexpression $\Xe_1$ goes to failure, $\LFOLD{L}{\Xe_1}{\Xe_2}$ also goes to failure.
By (\textsc{E-FoldCap3}), if the subexpression $\Xe_1$ derives $v_1$ but the subexpression $\Xe_2$ goes to failure, $\LFOLD{L}{\Xe_1}{\Xe_2}$ derives just $v_1$. Note that $v_1$ is not nested in the node with $\lb{L}$.

\section{Typing CPEG} \label{formaltypingdef}

\subsection{Regular Expression Types}
In this section, we review the syntax and the semantics of regular expression types (RETs) \cite{Hosoya:2005}.

A RET $\mathtt{T}$ is inductively defined as follows:
\begin{equation}
\begin{array}{llll}
{\mathtt{T}} &::=& \mathtt{Empty} &\mbox{empty sequence}\\
&&     \mathtt{T_1,T_2} &\mbox{concatenation}\\
&&     \mathtt{T_1}~|~\mathtt{T_2}&\mbox{union} \\
&&     \mathtt{T^*}&\mbox{repetition} \\
&&     \mathtt{\lb{L}[T_1,\dots,T_n]} & \mbox{label}\\
&&     \mathtt{X} & \mbox{type variable}\\
\end{array}
\notag
\end{equation}

Here, the semantics of RETs is given by the relation $v:\mathtt{T}$ ($v\in\mathcal{T}_{\mathcal{S}\times\Sigma^\ast}$), read ``the tree $v$ has type $\mathtt{T}$'' \textemdash the smallest relation closed under the set of typing rules in Figure \ref{subtyping}.

\begin{figure}
	\begin{multicols}{2}
		\infax[S-Empty]{x:\mathtt{Empty}}
		\infrule[S-Seq]{v_1:\mathtt{T}_1\andalso v_2:\mathtt{T}_2}{v_1,v_2:\mathtt{T_1,T_2}}
		\infrule[S-Or1]{v:\mathtt{T_1}}{v:\mathtt{T_1|T_2}}
		\infrule[S-Or2]{v:\mathtt{T_2}}{v:\mathtt{T_1|T_2}}
		\infrule[S-Rep]{v_i: \mathtt{T}~\mbox{for each}~i}{v_1,\dots,v_n:\mathtt{T^*}}
		\infrule[S-Node]{v:\mathtt{T}}{\lb{L}[v]:\lb{L}\mathtt{[T]}}
		\infrule[S-Var]{E(\mathtt{X})=\mathtt{T}\andalso v:\mathtt{T}}{v:\mathtt{X}}
	\end{multicols}
	\caption{Typing rules for trees}
	\label{subtyping}
\end{figure}

The empty sequence $\mathtt{Empty}$, the concatenation $\mathtt{T_1,T_2}$, the union $\mathtt{T_1}|\mathtt{T_2}$, and the repetition $ \mathtt{T^*}$ come from regular expressions, as its name implies.
The label $\lb{L}[\mathtt{T}]$ represents a tree that contains a subtree that has a type $\mathtt{T}$.

The $\mathtt{X}$ represents a type variable that binds an arbitrary RET.
The bindings of type variables are given by a single global set $E$ of type definitions of the following form:
\[\mathtt{type~X} = \mathtt{T}\]
We regard $E$ as a mapping from type variables to their bodies and write $E(\mathtt{X})$ for a reference to the mapped type from $\texttt{X}$ in $E$.

As shown above, the type variables allow recursive types.
The readers should note that the expressiveness of RETs correspond to some of the context-free grammars, despite the fact that its name is regular.

\subsection{Subtyping}
The subtype relation between two types is defined semantically: two types $\mathtt{S},\mathtt{T}$ are in the subtype relation $<:$ if and only if $v:\mathtt{S}$ implies $v:\mathtt{T}$
\[\mathtt{S}<:\mathtt{T}\iff \forall v\in\mathcal{T}_{\mathcal{S}\times\Sigma^\ast}.~v:\mathtt{S}\Rightarrow v:\mathtt{T}\]

As Hosoya \cite{Hosoya:2005}　indicated, the semantic notion of subtype relation over RETs immediately corresponds to the notion of inclusion relation on the set theory.

\subsection{Type Inference for CPEG}\label{typeinf}

CPEG intends to infer the types of trees from a grammar before constructing concrete trees.

To start, we consider that a tree $v$ is derived from an input $x$ with a CPEG $G=(N_G,\Sigma,P_G,\Xe_s,\mathcal{S})$.
Since the tree is constructed by the derivation $\Xe_s\Downarrow^x_y v$, the type of $G$ is regarded as the type of the start expression $\Xe_s$.

Now, we define type inference rules for CPEG expressions.
Let $E$ be a single global set of type bindings and $\Gamma$ be a type environment mapping from non-terminals to type variables. The mappings in $\Gamma$ are denoted by $A:\mathtt{X_A}$.

The inference rules are defined as a typing relation denoted by $\Gamma\vdash \Xe:\mathtt{T}\valspace{\chi}E$, which can be read  ``\emph{under a typing environment $\Gamma$, an expression $\Xe$ has a type \texttt{T} with a global set $E$}''.
The relations are the smallest relation closed under the set of typing rules shown in Figure \ref{typing}.

The $\chi$ is a set of type variables. The $\chi$ is used to store the type variables introduced in each subderivation, and $\mathtt{T}\valspace{\chi} E$ ensure that the variables appearing in $\mathtt{T}$ are fresh, for each condition of $\chi$ prevent us from building a derivation in which the same variable is used as ``fresh'' in two different places.
Since there is an infinite supply of type variable names, we can always find a way to satisfying the condition.
These conventions come from Pierce's textbook \cite{pierce2002types}.

\begin{figure}[ht]
	\infax[T-Empty]{\vdash\varepsilon:\mathtt{Empty}\valspace{\emptyset}\emptyset}
	\infax[T-Term]{\vdash a:\mathtt{Empty}\valspace{\emptyset}\emptyset}
	\infrule[T-Nt1]{A:\mathtt{X_A}\not\in \Gamma\andalso\Gamma,A:\mathtt{X_A}\vdash P_G(A):\mathtt{T}\valspace{\chi}E\andalso\{\mathtt{X_A}\}\cap\chi=\emptyset}{\Gamma\vdash A:\mathtt{X_A}\valspace{\chi\cup\{\mathtt{X_A}\}}E\cup\{\mathtt{type~X_A}=\mathtt{T}\}}
	\infrule[T-Nt2]{A:\mathtt{X}\in\Gamma}{\Gamma\vdash A:\mathtt{X}\valspace{\emptyset}\emptyset}
	\infrule[T-Seq]{\Gamma\vdash \Xe_1:\mathtt{T}_1\valspace{\chi_1}E_1\andalso \Gamma\vdash \Xe_2:\mathtt{T}_2\valspace{\chi_2}E_2\andalso \chi_1\cap\chi_2=\emptyset}{\Gamma\vdash \Xe_1\Xe_2:\mathtt{T_1,T_2}\valspace{\chi_1\cup\chi_2}E_1\cup E_2}
	\infrule[T-Alt]{\Gamma\vdash \Xe_1:\mathtt{T}_1\valspace{\chi_1}E_1\andalso \Gamma\vdash \Xe_2:\mathtt{T}_2\valspace{\chi_2}E_2\andalso \chi_1\cap\chi_2=\emptyset}{\Gamma\vdash \Xe_1/\Xe_2:\mathtt{T_1|T_2}\valspace{\chi_1\cup\chi_2}E_1\cup E_2}
	\infrule[T-Rep]{\Gamma\vdash \Xe:\mathtt{T}\valspace{\chi}E}{\Gamma\vdash \Xe\ZERO:\mathtt{T^\ast}\valspace{\chi}E}
	\infax[T-Not]{\vdash\NOT\Xe:\mathtt{Empty}\valspace{\emptyset}\emptyset}
	\infrule[T-Capture]{\Gamma\vdash \Xe:\mathtt{T}\valspace{\chi}E}{\Gamma\vdash\CAP{L}{\Xe}:\lb{L}[\mathtt{T}]\valspace{\chi}E}
	\infrule[T-FoldCap]{\Gamma\vdash \Xe_1:\mathtt{T_1}\valspace{\chi_1}E_1\andalso \Gamma\vdash \Xe_2:\mathtt{T_2}\valspace{\chi_2}E_2\\
		\chi_1\cap\chi_2=\emptyset\andalso \chi_1\cap\{\mathtt{X}\}=\emptyset\andalso \{\mathtt{X}\}\cap\chi_2=\emptyset}{\Gamma\vdash \LFOLD{L}{\Xe_1}{\Xe_2}:\mathtt{X}\valspace{\chi_1\cup\chi_2\cup\{\mathtt{X}\}}E_1\cup E_2\cup\{\mathtt{type~X}=\lb{L}[\mathtt{X},\mathtt{T_2}]|\mathtt{T_1}\}}
	\caption{Typing rules associated with a single global set $E$}
	\label{typing}
\end{figure}

The rules (\textsc{T-Nt1}) and (\textsc{T-Nt2}) are types for nonterminal symbols. The premise $A:\mathtt{T}\not\in\Gamma$ in \textsc{(T-Nt1)} is simply an explicit reminder: first, we check if the relation $A:\mathtt{T}$ is included in $\Gamma$; if not, we add that relation to $\Gamma$ and then try to calculate $\mathtt{T}$ for $P_G(A)$.
In the rule (\textsc{T-Nt1}), the sets $E$ and $\chi$ are updated to $E\cup\{\mathtt{type~X_A}=\mathtt{T}\}$ and $\chi\cup\{\mathtt{X_A}\}$ respectively.
The rules (\textsc{T-Seq}), (\textsc{T-Alt}), and (\textsc{T-FoldCap}) update the global set and $\chi$ as well.

Now, let us consider the typing of the following CPEG:
\begin{align*}
G=&(\{Prod,Val\},\{0,1,2,\cdots,9,\APLstar\},P_G,Prod,\{\lb{Prod},\lb{Int}\})\\
P_G=&\{Prod\leftarrow \LFOLD{Prod}{Val}{~\APLstar~ Val}, \\
&Val\leftarrow\CAP{Int}{[0-9]}\}
\end{align*}
Note that $[0-9]$ is a derived form of $0/1/2/3/\cdots/7/8/9$.

\begin{figure*}[ht]
	
	\begin{prooftree}
		\AxiomC{$\vdots$}
		\RightLabel{\textsc{(T-Capture)}}
		\UnaryInfC{$Prod:\mathtt{X_1},Val:\mathtt{X_3}\vdash\CAP{Int}{[0-9]}:\lb{Int}[\mathtt{Empty}]\valspace{\emptyset}\emptyset$}
		\RightLabel{\textsc{(T-Nt1)}}
		\UnaryInfC{$Prod:\mathtt{X_1}\vdash Val:\mathtt{X_3}\valspace{\{\mathtt{X_3}\}} E_2$}
		\AxiomC{\large$\mathcal{D}$}
		\RightLabel{\textsc{(T-FoldCap)}}
		\BinaryInfC{$Prod:\mathtt{X_1}\vdash\LFOLD{Prod}{Val}{\APLstar Val}:\mathtt{X_2}\valspace{\{\mathtt{X_2,X_3,X_4}\}}E_1$}
		\RightLabel{\textsc{(T-Nt1)}}
		\UnaryInfC{$\vdash Prod:\mathtt{X_1}\valspace{\{\mathtt{X_1,X_2,X_3,X_4}\}}E$}
	\end{prooftree}
	\begin{flushleft}
		Where,
		{\large$\mathcal{D}$}$=$
	\end{flushleft}
	\begin{prooftree}
		\AxiomC{}
		\RightLabel{\textsc{(T-Term)}}
		\UnaryInfC{$Prod:\mathtt{X_1}\vdash\APLstar:\mathtt{Empty}\valspace{\emptyset}\emptyset$}
		\AxiomC{$\vdots$}
		\RightLabel{\textsc{(T-Capture)}}
		\UnaryInfC{$Prod:\mathtt{X_1},Val:\mathtt{X_4}\vdash\CAP{Int}{[0-9]}:\lb{Int}[\mathtt{Empty}]\valspace{\emptyset}\emptyset$}
		\RightLabel{\textsc{(T-Nt1)}}
		\UnaryInfC{$Prod:\mathtt{X_1}\vdash Val:\mathtt{X_4}\valspace{\{\mathtt{X_4}\}}E_3$}
		\RightLabel{\textsc{(T-Seq)}}
		\BinaryInfC{$Prod:\mathtt{X_1}\vdash\APLstar Val:\mathtt{Empty},\mathtt{X_4}\valspace{\{\mathtt{X_4}\}}E_3$}
	\end{prooftree}
	
	\begin{align*}
	{E\;}&=\{\mathtt{type~X_1}=\mathtt{X_2}, \mathtt{type~X_2}=\lb{Prod}[\mathtt{X_2},~\mathtt{Empty},\mathtt{X_4}]~|~\mathtt{X_3}
	,\mathtt{type~X_3}=\lb{Int}[\mathtt{Empty}]
	,\mathtt{type~X_4}=\lb{Int}[\mathtt{Empty}]\}\\
	{E_1}&=\{\mathtt{type~X_2}=\lb{Prod}[\mathtt{X_2},~\mathtt{Empty},\mathtt{X_4}]~|~\mathtt{X_3}
	,\mathtt{type~X_3}=\lb{Int}[\mathtt{Empty}]
	,\mathtt{type~X_4}=\lb{Int}[\mathtt{Empty}]\}\\
	{E_2}&=\{\mathtt{type~X_3}=\lb{Int}[\mathtt{Empty}]\}\\
	{E_3}&=\{\mathtt{type~X_4}=\lb{Int}[\mathtt{Empty}]\}
	\end{align*}
	\caption{A derivation tree that derives $\vdash Prod:\mathtt{X_1}\valspace{\{\mathtt{X_1,X_2,X_3,X_4}\}}E$. This tree is omitted the freshness condition of type variables for simplicity.}
	\label{prooftree}
\end{figure*}

The derivation of typing is shown in Figure \ref{prooftree}.

First, $\CAP{Int}{[0-9]}$ has the type $\lb{Int}[\mathtt{Empty}]$ by \textsc{(T-Capture)}, where both $E$ and $\chi$ are empty.
Although $[0-9]$ in this premise has some type, we omit this derivation for readability.
We derive that $Val$ has a type variable $\mathtt{X_3}$ by \textsc{(T-Nt1)}.
At this derivation, the empty global set is updated to $E_2$ as shown in the bottom of Figure \ref{prooftree}.
Moreover, the empty $\chi$ is updated to $\{\mathtt{X_3}\}$. Note that $\mathtt{X_3}$ is newly introduced in this step.

The next derivation $\APLstar Val$ is a little complex. Hence, we make a sub-derivation tree $\mathcal{D}$. The sub-derivation proceeds similarly from the upper left of the tree to the bottom.
As a result of $\mathcal{D}$, the following is derived. \[Prod:\mathtt{X_1}\vdash~\APLstar~Val:\mathtt{Empty},\mathtt{X_4}\valspace{\{\mathtt{X_4}\}}E_3\]
where, ${E_3}=\{\mathtt{type~X_4}=\lb{Int}[\mathtt{Empty}]\}$.

Now, we can go back to the main derivation tree. The following is derived by \textsc{(T-FoldCap)}.
\[Prod:\mathtt{X_1}\vdash\LFOLD{Prod}{Val}{~\APLstar~ Val}:\mathtt{X_2}\valspace{\{\mathtt{X_2,X_3,X_4}\}}E_1\]
where, ${E_1}=\{\mathtt{type~X_2}=\lb{Prod}[\mathtt{X_2},~\mathtt{Empty},\mathtt{X_4}]~|~\mathtt{X_3}$\\
\hfil$
,\mathtt{type~X_3}=\lb{Int}[\mathtt{Empty}]
,\mathtt{type~X_4}=\lb{Int}[\mathtt{Empty}]\}$.\\
Finally, applying \textsc{(T-Nt1)}, $\mathtt{X_1}$ and the global set $E$:
\begin{equation*}
\left\{\begin{array}{llll}
\mathtt{type~X_1}=\mathtt{X_2}, &\mathtt{type~X_2}=\lb{Prod}[\mathtt{X_2},~\mathtt{Empty},\mathtt{X_4}]~|~\mathtt{X_3}, \\
\mathtt{type~X_3}=\lb{Int}[\mathtt{Empty}], & \mathtt{type~X_4}=\lb{Int}[\mathtt{Empty}]
\end{array}\right\}
\end{equation*}
are derived.

\section{Properties of Typing Rules for CPEG}\label{meta}
In this section, we show type soundness and uniqueness of types property of the typing rules.

First, we have shown the \emph{uniqueness of types} property, which means that the inferred type is always unique for a given CPEG. Furthermore, this property says that the derivation tree is also deterministic.

\begin{theorem}[Uniqueness of Types]
	In a given typing context $\Gamma$, an expression $\Xe$ has uniquely one type with type variables all in the domain of a global set of types $E$.
	Moreover, there is just one derivation of this typing built from the typing rule that derives the typing relation.
\end{theorem}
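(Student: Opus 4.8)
The plan is to prove both statements — uniqueness of the inferred type and uniqueness of the derivation — together, by induction on a well-founded measure rather than by plain structural induction on $\Xe$. Structural induction fails because the premise of \textsc{(T-Nt1)} types the body $P_G(A)$, which is generally not a subexpression of $A$. I would instead order judgments $\Gamma\vdash\Xe:\mathtt{T}\valspace{\chi}E$ by the lexicographic measure $m(\Gamma,\Xe)=(\,|N_G\setminus\mathrm{dom}(\Gamma)|,\ \mathrm{size}(\Xe)\,)$, where $\mathrm{size}$ counts the syntactic size of the expression tree, treating each nonterminal occurrence as a leaf. Every rule other than \textsc{(T-Nt1)} leaves $\Gamma$ unchanged and has premises on proper subexpressions of $\Xe$, so the second component strictly decreases; \textsc{(T-Nt1)} has the single premise $(\Gamma,A:\mathtt{X_A})\vdash P_G(A):\mathtt{T}\valspace{\chi}E$ with $A\in N_G$ and $A\notin\mathrm{dom}(\Gamma)$, so the first component strictly decreases. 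Since $N_G$ is finite, $m$ is well-founded; as a by-product this shows the inference terminates on every CPEG (no well-formedness hypothesis is needed for this theorem — well-formedness is only used later, for the well-formedness of the resulting RET).

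The induction step is a case analysis on the syntactic form of $\Xe$, using that the rules are syntax-directed. For $\varepsilon$, a terminal $a$, and $\NOT\Xe$, a single axiom (\textsc{T-Empty}, \textsc{T-Term}, \textsc{T-Not}) applies and fixes the type as $\mathtt{Empty}$ with $\chi=\emptyset$ and $E=\emptyset$, so the derivation is literally that axiom. For a nonterminal $A$, the hypotheses $A:\mathtt{X}\in\Gamma$ of \textsc{(T-Nt2)} and $A:\mathtt{X_A}\notin\Gamma$ of \textsc{(T-Nt1)} are mutually exclusive and jointly exhaustive (because $\Gamma$ is a partial function on nonterminals), so exactly one rule applies; \textsc{(T-Nt2)} reads the type off $\Gamma$ and $\mathtt{X}$ is unique by functionality, while \textsc{(T-Nt1)} appeals to the induction hypothesis on $(\Gamma,A:\mathtt{X_A})\vdash P_G(A)$ — legitimate since $m$ decreased — to obtain a unique subderivation, hence a unique $\mathtt{T}$ and a uniquely updated $E$. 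For the compound forms $\Xe_1\Xe_2$, $\Xe_1/\Xe_2$, $\Xe\ZERO$, $\CAP{L}{\Xe}$, and $\LFOLD{L}{\Xe_1}{\Xe_2}$, again exactly one rule applies; the induction hypothesis gives unique subderivations on the immediate subexpressions (whose measures are smaller), and the rule determines the conclusion — type, $\chi$, and $E$ — as a function of those subresults. The side conditions $\chi_1\cap\chi_2=\emptyset$, $\chi_1\cap\{\mathtt{X}\}=\emptyset$, etc., are not sources of branching: they only record that the fresh variables introduced in sibling subderivations are kept disjoint.

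That last point is the one genuine caveat, and getting it right is the main work: strictly speaking $\Gamma\vdash\Xe:\mathtt{T}\valspace{\chi}E$ is functional in $(\Gamma,\Xe)$ only up to the choice of the fresh variable names introduced by \textsc{(T-Nt1)} and \textsc{(T-FoldCap)}. I would therefore either (i) fix a deterministic fresh-variable policy — say, always take the least index not occurring in the $\chi$ accumulated so far, threading the accumulator left-to-right through the derivation — under which the derivation and the pair $(\mathtt{T},E)$ become literally unique and the disjointness side conditions are satisfied automatically, or (ii) state and prove uniqueness modulo a consistent renaming of the introduced variables. Separately, to justify the phrase ``type variables all in the domain of $E$'' I would strengthen the induction hypothesis with the well-definedness invariant: every type variable occurring in $\mathtt{T}$, or on a right-hand side of a definition in $E$, lies in $\mathrm{dom}(E)\cup\{\mathtt{X}\mid A:\mathtt{X}\in\Gamma\}$; each case preserves it (the one to check carefully is \textsc{(T-Nt1)}, where the new binding $\mathtt{type~X_A}=\mathtt{T}$ adds $\mathtt{X_A}$ to $\mathrm{dom}(E)$ exactly as $A$ is discharged from the context component), and at the root $\Gamma=\emptyset$, so the bare statement follows. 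Apart from this bookkeeping the argument is routine.
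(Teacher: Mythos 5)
Your proposal is correct, and it is considerably more careful than the paper's own proof, which consists of the single sentence ``the proof goes by the structural induction on $\Xe$.'' You are right that this literal claim does not go through: in the rule \textsc{(T-Nt1)} the premise types $P_G(A)$, which is not a subexpression of the nonterminal $A$, so plain structural induction has no applicable hypothesis there. Your lexicographic measure $(\,|N_G\setminus\mathrm{dom}(\Gamma)|,\ \mathrm{size}(\Xe)\,)$ is the standard repair and is exactly what is needed; it also yields termination of the inference as a corollary, which the paper never establishes. Your second caveat is likewise a genuine issue the paper elides: since \textsc{(T-Nt1)} and \textsc{(T-FoldCap)} introduce fresh type variables whose names are unconstrained beyond the disjointness side conditions on $\chi$, the judgment is functional in $(\Gamma,\Xe)$ only up to a consistent renaming of those variables, so the theorem as stated needs either a deterministic naming policy or an ``up to renaming'' qualifier (the paper implicitly leans on the Pierce-style freshness convention it cites but never discharges this in the proof). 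In short, you and the paper are aiming at the same induction-plus-syntax-directedness argument, but your version identifies and fills the two holes that the paper's one-line proof leaves open; nothing in your proposal would fail.
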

\begin{proof}
	The proof goes by the structural induction on $\Xe$.
\end{proof}

Next, we have shown type soundness.
Intuitively, type soundness states that if a CPEG has type $\mathtt{T}$ under the type inference, then the trees derived from the CPEG are typed by the type $\mathtt{T}$.

\begin{theorem}[Soundness]
	Let $G$ be a CPEG such that $G=(N_G,\Sigma,$
	$P_G,
	e_s,\mathcal{S})$.
	Let $E$ be a global set of regular expression types.
	\[\forall v\in\{v|\exists x,y.e_s\Downarrow_y^x v\}.~ \Gamma\vdash e_s:\mathtt{T}\valspace{\chi}E\Rightarrow v:\mathtt{T}\]
\end{theorem}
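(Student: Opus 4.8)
The plan is to prove a statement strictly more general than the displayed one, by induction on the operational semantics of Figure~\ref{down}, and then to read off the theorem. First I would fix the typing derivation of $\vdash e_s:\mathtt{T}\valspace{\chi}E$, which by the uniqueness of types established above is uniquely determined, and record two bookkeeping facts about the rules of Figure~\ref{typing}: the global set attached to the conclusion of every rule contains those attached to all of its premises, and the freshness conditions carried by the $\chi$-components guarantee that no type variable is ever rebound. Consequently $E$ is a conservative extension of the global set attached to every node of the derivation, so the tree-typing relation $v:\mathtt{T}$ of Figure~\ref{subtyping} may be read throughout with respect to the single set $E$. The claim I would then prove is: \emph{for every node of this typing derivation, with conclusion $\Gamma'\vdash \Xe':\mathtt{T}'\valspace{\chi'}E'$, and for every derivation $\Xe'\Downarrow_{y'}^{x'}v$ of Figure~\ref{down} whose output $v$ is a tree} (i.e.\ $v\neq f$), \emph{one has $v:\mathtt{T}'$}. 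Specialising to the root node and $\Xe'=e_s$ yields the theorem. Induction on the operational derivation (rather than structural induction on $\Xe'$) is preferable because it handles the unbounded recursive unfoldings of $\Xe\ZERO$ and of fold-capture uniformly.

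The induction is on the height of $\Xe'\Downarrow_{y'}^{x'}v$. In each case the shape of $\Xe'$ fixes the typing rule used at the matching node; each tree-producing sub-derivation of the operational rule is paired with the corresponding premise of that typing rule, the induction hypothesis is applied there, and the pieces are recombined through Figure~\ref{subtyping}. When $\Xe'$ is $\varepsilon$, a terminal, or $\NOT\Xe$, the only possible tree output is a string, which has type $\mathtt{Empty}$ by (\textsc{S-Empty}), matching (\textsc{T-Empty})/(\textsc{T-Term})/(\textsc{T-Not}). For $\Xe_1\Xe_2$, $\Xe_1/\Xe_2$ and $\CAP{L}{\Xe_1}$ one applies the hypothesis to the sub-derivations and closes with (\textsc{S-Seq}), (\textsc{S-Or1})/(\textsc{S-Or2}), and (\textsc{S-Node}), using the types supplied by (\textsc{T-Seq})/(\textsc{T-Alt})/(\textsc{T-Capture}). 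For $\Xe\ZERO$, rule (\textsc{E-Rep2}) yields the empty string, which inhabits $\mathtt{T}^\ast$ by (\textsc{S-Rep}) with no factors, while rule (\textsc{E-Rep1}) yields $v_1,v_2$ with $\Xe\Downarrow v_1$ and $\Xe\ZERO\Downarrow v_2$ both shorter, so the hypothesis gives $v_1:\mathtt{T}$ (premise of (\textsc{T-Rep})) and $v_2:\mathtt{T}^\ast$ (the same node), and $v_1,v_2:\mathtt{T}^\ast$ follows again by (\textsc{S-Rep}). For a nonterminal $A$, rule (\textsc{E-Nt}) gives a shorter $P_G(A)\Downarrow v$; whether the node used (\textsc{T-Nt1}), or (\textsc{T-Nt2}) with $A:\mathtt{X}\in\Gamma'$, the binder $A:\mathtt{X_A}$ (resp.\ $A:\mathtt{X}$) was created at that node or at an ancestor (\textsc{T-Nt1}) whose premise types $P_G(A):\mathtt{T}''$ and registers $\mathtt{type~X_A}=\mathtt{T}''$ (resp.\ $\mathtt{type~X}=\mathtt{T}''$); this binding lies in $E$ by the monotonicity above, the hypothesis applied at that premise gives $v:\mathtt{T}''$, and (\textsc{S-Var}) gives $v:\mathtt{X_A}$ (resp.\ $v:\mathtt{X}$).

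The delicate case, and the one I expect to be the main obstacle, is $\Xe'=\LFOLD{L}{\Xe_1}{\Xe_2}$, whose node used (\textsc{T-FoldCap}): thus $\mathtt{T}'=\mathtt{X}$ with $E(\mathtt{X})=\lb{L}[\mathtt{X},\mathtt{T_2}]~|~\mathtt{T_1}$ and premises typing $\Xe_1:\mathtt{T_1}$ and $\Xe_2:\mathtt{T_2}$. Rule (\textsc{E-FoldCap3}) outputs $v_1$ with $\Xe_1\Downarrow v_1$, so the hypothesis gives $v_1:\mathtt{T_1}$, hence $v_1:\lb{L}[\mathtt{X},\mathtt{T_2}]~|~\mathtt{T_1}$ by (\textsc{S-Or2}) and $v_1:\mathtt{X}$ by (\textsc{S-Var}). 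Rule (\textsc{E-FoldCap1}) outputs the left-nested tree $\lb{L}[\lb{L}[\cdots\lb{L}[v_1,v_2]\cdots,v_{n-1}],v_n]$; the hypothesis applied to $\Xe_1\Downarrow v_1$ and to each $\Xe_2\Downarrow v_i$ gives $v_1:\mathtt{T_1}$ and $v_i:\mathtt{T_2}$ for $2\le i\le n$, and I would then run a secondary induction on $k$ showing that the $k$-th partial fold $w_k$, defined by $w_1=v_1$ and $w_{k+1}=\lb{L}[w_k,v_{k+1}]$, satisfies $w_k:\mathtt{X}$: the base case $w_1:\mathtt{T_1}\Rightarrow w_1:\mathtt{X}$ is exactly the computation just made, and the step takes $w_k:\mathtt{X}$ and $v_{k+1}:\mathtt{T_2}$ to $w_k,v_{k+1}:\mathtt{X},\mathtt{T_2}$ by (\textsc{S-Seq}), to $\lb{L}[w_k,v_{k+1}]:\lb{L}[\mathtt{X},\mathtt{T_2}]$ by (\textsc{S-Node}), to $\lb{L}[w_k,v_{k+1}]:\lb{L}[\mathtt{X},\mathtt{T_2}]~|~\mathtt{T_1}=E(\mathtt{X})$ by (\textsc{S-Or1}), and finally to $w_{k+1}:\mathtt{X}$ by (\textsc{S-Var}); instantiating at $k=n$ gives $v:\mathtt{X}$ and closes the case.

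I single out the fold-capture step as the crux because it is the only place where a sub-derivation of unbounded size must be absorbed into a recursively defined type, and the secondary induction has to be phrased so that the variable $\mathtt{X}$ is unfolded by (\textsc{S-Var}) at precisely each nesting level of the tree produced by (\textsc{E-FoldCap1}); everything else reduces to routine bookkeeping once the statement is generalised to arbitrary subexpressions and the global set is pinned to the single $E$. Note that the argument never appeals to the well-formedness restriction on grammars, which is only needed later to guarantee good properties of the inferred type itself.
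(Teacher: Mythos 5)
Your proposal is correct and follows essentially the same route as the paper: induction on the derivation of $\Xe\Downarrow^x_y v$, pairing each operational rule with the (unique) typing rule for that expression form and closing each case with the rules of Figure~\ref{subtyping}. The paper works out only the (\textsc{E-Capture1}) case and asserts the remaining rules ``proceed in the same manner,'' so your explicit treatment of the global-set bookkeeping and of the fold-capture case --- the secondary induction on the partial folds $w_k$, unfolding $\mathtt{X}$ via (\textsc{S-Var}) at each nesting level of the tree produced by (\textsc{E-FoldCap1}) --- supplies precisely the argument the paper leaves implicit, and correctly identifies it as the one case that does not reduce to the capture pattern.
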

\begin{proof}
	By induction on derivation of $\Xe\Downarrow^x_y v$.
	The (\textsc{E-Empty}), (\textsc{E-Term1}), (\textsc{E-Not2}) cases are immediate by the rules (\textsc{T-Empty}), (\textsc{T-Term}), (\textsc{T-Not}), (\textsc{S-Empty}).
	For the other cases, we will discuss the (\textsc{E-Capture1}) case and the proof for the remaining rules proceed in the same manner.
	
	\textit{Case }\textsc{E-Capture1}: 
	The start expression is $\CAP{L}{\Xe}$ and the derived tree is $\lb{L}[v]$ such that $\Xe\Downarrow_y^x v$.
	Let the following be an induction hypothesis.
	$$\Gamma\vdash\Xe:\mathtt{T}\valspace{\chi}E\Rightarrow v:\mathtt{T}$$
	The only typing rule for $\CAP{L}{\Xe}$ is the rule (\textsc{T-Capture}).
	By inversion of (\textsc{T-Capture}), $$\Gamma\vdash\Xe:\mathtt{T}\valspace{\chi}{E}$$
	Using the induction hypothesis, $v$ has type $\mathtt{T}$. Finally, $\lb{L}[v]:\lb{L}[\mathtt{T}]$ by (\textsc{S-Node}).
\end{proof}
\section{Related Work}\label{relwork}

Here we review parsing with an emphasis on tree construction technique.

PEGs have been gaining popularity among many language developers since it was presented by B. Ford in 2004. In the context of PEGs, a parser generator is a standard approach to the development of parsers \cite{Grimm:2006:BET:1133981.1133987}.
PEGs, as well as other CFG-based grammars, can well describe the syntactic pattern of the input while providing a very poor specification of the output.
Consequently, the construction of syntax trees is mostly left to the embedded action code, which is usually written in a target programming language.
However, the use of action code reduces a good property of declarative grammar specification.
Moreover, their additional code generation and compilation process are cumbersome compared to regex-style matching and capturing.

Parser combinators such as Parsec \cite{leijen2002parsec}, FParsec \cite{tolksdorf2013fparsec}, and Scala Parser Combinators \cite{Moors08parsercombinators} provide a more integrated means for writing a recursive descent parser.
Although parser combinators are not always based on some grammar formalism, many PEG parsers have been implemented by the combinators.
In parser combinators, the resulting syntax trees can be well-typed.
The tree construction, however, relies on code fragments that directly manipulates parsing results. Consequently, the specification of a parser is still hard to maintain \cite{Klint:2010:IDT:1868281.1868291,Adams:2013:PPI:2429069.2429129}.

More recently, declarative parsing (no action code) has been focused in many grammar formalisms, since action code makes it difficult to maintain a parser specification and reduces grammar reusability \cite{Klint:2010:IDT:1868281.1868291,Adams:2013:PPI:2429069.2429129}.
In the contexts of PEGs, LPeg \cite{LPeg} is implemented as a PEG-based pattern matching tool that provides a grouped capture like regex.
Moreover, Nez grammar \cite{Nez,NezGrammar}, an ancestor of CPEG, provides a structured capture that can construct complex syntax trees. These parse data require no action code, but they are untyped.

For the parser users, types are significant. This viewpoint has made another attempt to parser generations from data types.
Notably, PADS/ML \cite{Mandelbaum:2007:PFD:1190216.1190231} describes DDC$^\alpha$ \cite{Mandelbaum:2007:PFD:1190216.1190231,Fisher:2006:NDD:1111320.1111039} and then generates a parser from the data specification. The ``parser from types'' approach can follow various syntax patterns including programming languages.
However, formally specifying the provided types safety would be more challenging \cite{Petricek:2016:TDM:2908080.2908115}.

Finally, a type system for grammar is new.
As a starting point, we use RETs, which have intensively been studied in the context of XML schemas and tree automata \cite{Hosoya:2003:XST:767193.767195,Hosoya:2005}.
We consider that  RETs are a straightforward type representation of syntax trees, and RETs can make a theoretical bridge between declarative parsing, tree automata, and programming language design. Indeed, binding RETs with ML and OCaml has been reported in \cite{Sulzmann:2006:TEX:1706640.1706940,Frisch:2006:OX:1159803.1159829}.

\section{Conclusion}\label{conc}
Regular expressions, or regexes, have had great success both as pattern matching and as a library tool to develop small parsers. However, the absence of recursive patterns results in very limited parser applications. Since PEGs are more powerful than regular expressions, PEGs with regex-like captures could make it much easier to integrate a full-fledged parser into programs.

CPEGs are a formally developed extension of PEGs with regex-like captures. Two annotations (capture and fold-capture) allow a flexible construction of complex syntax trees.
More importantly, a CPEG tree is a company with regular expression types that are a foundation of XML schema and tree automata.
A regular expression type for a given CPEG is inferred syntactically.
We present a formal definition of the type inference and we proved its soundness and uniqueness of types property.

Our attempt to type system for grammar is new.
There are several interesting issues that remain unexplored.
The future direction is that we will implement the CPEG based parser and investigate a practical aspect of CPEG. We intend to implement the CPEG based parser in F\# and integrate our type inference into F\# type provider \cite{syme2012strongly}.

\bibliographystyle{ACM-Reference-Format}
\bibliography{parser,mypaper,url,data,local,schemas}

\end{document}